\documentclass[sigplan,10pt,nonacm]{acmart}

\usepackage{graphicx}
\usepackage{booktabs}
\usepackage{amsmath}
\usepackage{enumitem}

\setcopyright{none}
\renewcommand\footnotetextcopyrightpermission[1]{}
\AtBeginDocument{\typeout{PROBE tfs=\the\textfloatsep\space bcs=\the\belowcaptionskip\space acs=\the\abovecaptionskip\space fs=\the\floatsep}}

\usepackage{amsthm}
\newtheorem{proposition}{Proposition}
\theoremstyle{definition}
\newtheorem{definition}{Definition}

\newcommand{\needload}{\texttt{need\_to\_load}}
\newcommand{\parhead}[1]{\par\vspace{1.5pt}\noindent\textbf{#1}\enspace}

\begin{document}

\title{Preserving Provenance in Shared KV Caches for LLM Serving}

\author{%
  Wei Song\textsuperscript{1}\quad
  Yuxin Cao\textsuperscript{2}\quad
  Xi Zheng\textsuperscript{3}\quad
  Leo Zhang\textsuperscript{1}\quad
  Xiao Cheng\textsuperscript{3}
}

\affiliation{%
  \institution{\textsuperscript{1}Griffith University}
  \country{}
}

\affiliation{%
  \institution{\textsuperscript{2}National University of Singapore}
  \country{}
}

\affiliation{%
  \institution{\textsuperscript{3}Macquarie University}
  \country{}
}

\renewcommand{\shortauthors}{Song et al.}

\begin{abstract}
Production LLM serving stacks combine an inference engine's local prefix cache with a shared KV-cache tier for fleet-wide reuse. The local cache distinguishes requests by adapter, weight configuration and sharing domain, but the shared tier may key entries only by token content and coarse model metadata. This boundary erases provenance and lets identical tokens under incompatible computational or sharing contexts collide. We call this composition gap \emph{provenance-blind reuse} and present its first systematic study. A source audit of three vLLM connectors confirms the structural omission, while runtime experiments reproduce it across vLLM and two SGLang releases, 12 models from 7 families (0.5\,B--32\,B), and over 160 configurations. Cross-adapter collisions reduce accuracy from 0.94 to 0.64, incompatible KV representations reduce reasoning accuracy to zero, and salt omission enables 93\% prompt identification from timing. We formalize the missing guarantee as the \emph{KV provenance contract}: for a declared dimension registry, shared keys must be injective over computational and sharing provenance, with identities stable across workers. Any dimension with a stable identity can therefore be added without connector-specific key logic. A canonical descriptor binds per-request and per-worker provenance into lookup and store keys, while a differential checker detects dimensions that change KV state without changing the key. Implemented in vLLM and SGLang~0.5.20 across three cache paths, provenance binding eliminates unsafe reuse while preserving legitimate sharing. Hit-path latency changes remain within 0.34~ms and below run-to-run variation; retention grows with provenance diversity.
\end{abstract}

\maketitle

\section{Introduction}

\suppressfloats[t]
\begin{figure}[t]
  \centering
  \includegraphics[width=0.92\columnwidth]{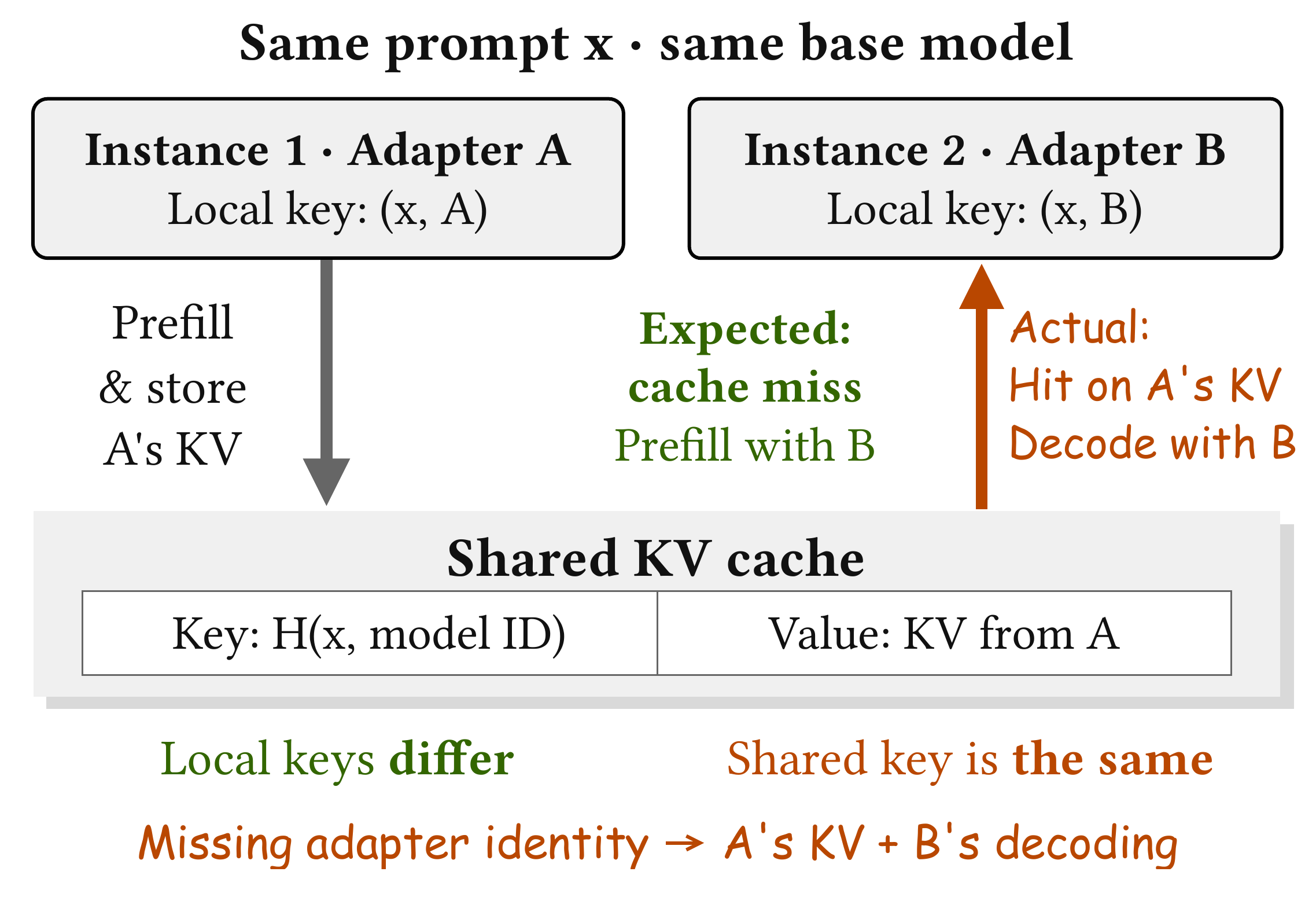}
  \caption{Cross-adapter provenance-blind reuse on vLLM. Because the shared key omits adapter identity, a request under~$B$ retrieves KV state produced under~$A$ and decodes with $B$'s weights over $A$'s prefix representation.}
  \label{fig:cross-adapter-example}
\end{figure}

Modern LLM serving systems increasingly rely on two cooperating layers of KV-cache reuse. An inference engine such as vLLM~\cite{kwon2023efficient} or SGLang~\cite{zheng2024sglang} manages a local prefix cache and distinguishes requests by their serving context, including adapter identity, weight configuration and, where supported, a sharing-domain salt. A shared tier such as LMCache~\cite{cheng2025lmcache} or Mooncake~\cite{qin2025mooncake} serves computed KV state to any worker whose request maps to the same key, enabling fleet-wide prefix reuse. Each layer is correct in isolation, but their composition is safe only if the shared key preserves every engine-level distinction that affects the cached state or permission to share it. As we show, this condition is routinely violated because context tracked locally by the engine is lost at the engine--connector boundary.

Figure~\ref{fig:cross-adapter-example} illustrates the failure concretely.
Two workers serve the same base model with different LoRA adapters~\cite{hu2022lora}, $A$ and $B$, while sharing an external KV tier.
The first worker computes KV state for a prompt under adapter~$A$ and stores it under a key derived from the prompt tokens.
When the same prompt later arrives at the second worker under adapter~$B$, the shared tier matches on the token-derived key and returns the state that $A$ produced.
The second worker therefore decodes with $B$'s weights over a prefix representation computed under~$A$, producing output that silently depends on an adapter the request did not select.
No error is raised and no adversary is needed. The engine's local cache would have kept the two adapters separate.
The failure arises solely from composing that local isolation with a shared key that omits adapter identity, which the engine alone holds.

The root cause generalizes beyond adapters.
The engine and the shared-cache connector derive their keys under different notions of equivalence: the engine distinguishes two identical token sequences whenever their request contexts produce different KV state or prohibit sharing, whereas the connector treats them as equivalent based on tokens and coarse model metadata alone.
Equality of shared keys thus implies neither computational equivalence nor permission to reuse.
We call this gap \emph{provenance-blind reuse}.
While adapter-related key conflicts have surfaced individually as engineering issues~\cite{lmcacheissue,vllmissue}, our work unifies them with failures involving weight precision, weight version, KV representation format, and sharing domains as instances of a single missing interface contract at the engine--connector boundary.

We conduct a systematic study of this failure across two inference engines, multiple connector implementations, and a wide range of model configurations.
A source-level audit of three independently implemented shared-cache connectors for vLLM (LMCache, the filesystem connector, and the Mooncake integration) confirms that the key omission is structural. All three derive keys the same way, a consequence of content addressing rather than a per-connector defect.
Runtime experiments reproduce the failure through vLLM's LMCache and filesystem paths and SGLang's HiCache path across two releases, spanning 12 models from 7 families (0.5\,B--32\,B) and more than 160 configurations.
The exact omitted dimensions vary by stack and release, but the pattern is consistent: the shared tier never receives the full provenance needed to decide whether cached state is safely reusable.

We organize the missing provenance into two classes with distinct failure modes. \emph{Computational provenance}, such as adapter identity, weight precision, weight version and KV representation, determines whether cached state is \emph{valid} for the requester; its omission can cause the shared tier to serve or load incompatible state. \emph{Sharing provenance}, represented by the tenant salt, determines whether otherwise compatible requests are \emph{permitted} to share; its omission defeats engine-level tenant isolation. The consequences vary: cross-adapter reuse reduces accuracy from 0.94 to 0.64 without an adversary, while incompatible KV representations reduce reasoning accuracy from 0.94 to zero. Weight-version harm is model- and task-dependent, whereas weight-precision reuse alters outputs without a statistically significant accuracy loss in the workloads tested. Salt omission lets a remote observer identify a victim's prompt with 93\% accuracy from timing alone. Routing locality and prefix diversity affect how often workers reach the shared tier but do not remove the flaw.

The most direct repair is to include the missing dimensions in the shared key.
However, two fundamental constraints make such a naive approach silently break reuse or leave the gap open.
First, process-local identifiers are not provenance identities. Two workers may assign different integer handles to the same LoRA adapter, so binding the raw handle would partition state that should be shared.
Second, key derivation occurs at execution sites with asymmetric metadata. In vLLM, for example, the scheduler that looks up the store knows the request's adapter and sharing domain, while the worker that writes the state may not. Injecting provenance at only one site therefore causes the two paths to derive different keys.

To address both constraints, we propose the \emph{KV provenance contract}, an interface specification parametric over a declared dimension registry at the engine--connector boundary. It requires configurations that produce different KV state to derive different shared keys (computational invariant), and configurations not permitted to share to do likewise (permission invariant). Both invariants require provenance identities to remain stable across workers and consistent between lookup and store paths. Extensibility follows structurally from the parametric design: any dimension with an injective, cross-worker-stable identity extends the contract without changing connector-specific key logic. We realize the contract with a canonically encoded provenance descriptor spanning two lifetimes. Per-request fields carry the adapter identity and salt through the request record to both access paths, while per-worker fields capture weight precision, version and KV format at startup. Each cache path's native key derivation consumes this descriptor. A differential checker holds token content fixed and reports any declared dimension that changes KV state without changing the derived key. We implement provenance binding in vLLM and SGLang~0.5.20 across LMCache, vLLM's filesystem connector and SGLang HiCache, eliminating cross-provenance reuse while fully preserving same-provenance sharing. Across both engines, observed hit-path latency changes are at most 0.34~ms and smaller than run-to-run variation, so no hit-path latency penalty is resolvable at our measurement scale; retention grows with provenance diversity on the file-backed store that each engine uses.

In summary, this paper makes three contributions:
\begin{itemize}[leftmargin=1em,labelsep=0.4em, topsep=.3em]

\item \textbf{First systematic identification of provenance-blind reuse.}
We present the first analysis of the composition gap between an LLM engine's
local cache and a shared KV-cache tier. We show that adapter, weight-precision,
weight-version, KV-representation, and sharing-domain conflicts, previously
reported as isolated bugs, are instances of a single missing interface contract
at the engine--connector boundary. A source-level audit covers three independently
implemented connectors for vLLM, and runtime experiments confirm the failure
across two engines (vLLM and SGLang), two SGLang releases, 12 models from
7 families (0.5\,B--32\,B), and more than 160 configurations.

\item \textbf{The KV provenance contract.}
We formalize the \emph{KV provenance contract}, a registry-parametric interface specifying when two requests may safely share cached state. It distinguishes computational from sharing provenance and requires each dimension affecting KV state or sharing permission to carry a stable, cross-worker identity. Extensibility is a formal property of the design: any new stable-identity dimension extends the contract without changing connector-specific key logic. We instantiate the contract for five dimensions: adapter, precision, version, KV format and tenant salt. A per-dimension differential omission test falsifies the contract under declared configuration changes, making omitted bindings directly testable.

\item \textbf{Cross-engine implementation of the KV provenance contract.}
We implement the contract as provenance binding in vLLM and SGLang~0.5.20 across three cache paths (LMCache, vLLM's filesystem connector, and SGLang HiCache). Across all paths it eliminates cross-provenance reuse while preserving same-provenance sharing, with no resolvable hit-path latency penalty at our measurement scale and storage cost linear in provenance diversity.

\end{itemize}

\section{Background and Threat Model}
\label{sec:background}

\subsection{Local and Shared KV Caching}
\label{sec:kvcaching}

During prefill, an autoregressive transformer computes key and value projections for every prompt token at every model layer. Subsequent decoding attends to these projections rather than recomputing the prompt. We call their collection the request's \emph{KV state}. KV state consumes substantial serving memory, and producing it accounts for much of the time to first token, so requests sharing a token prefix can avoid redundant prefill by reusing the state. A \emph{local prefix cache} retains that state within one inference worker and indexes it by the shared prefix: vLLM~\cite{kwon2023efficient} hashes fixed-size token blocks, while SGLang~\cite{zheng2024sglang} organizes prefixes in a radix tree. Because the cached state remains local to the worker, only requests routed to that worker can reuse it.

A \emph{shared KV-cache tier} extends this reuse across workers by placing cached state in process-external storage. Systems such as LMCache~\cite{cheng2025lmcache} and Mooncake~\cite{qin2025mooncake}, vLLM's filesystem connector, and SGLang HiCache expose such storage through a connector at the engine boundary. On a local miss, the connector derives a shared key from the prefix and available metadata and queries the external store; a hit loads the cached tensors and skips the corresponding prefill work, while a miss computes and writes new state. Because the store outlives individual workers, state written by one remains available to another across scheduling and deployment changes. The connector's shared key (not the engine's local-cache key) therefore determines which requests are treated as equivalent at fleet scope.

\subsection{Serving Configurations and Isolation}
\label{sec:configs}

Safe KV reuse requires computational compatibility and permission to share. \emph{Computational provenance} determines validity for the requesting configuration, while \emph{sharing provenance} determines whether reuse is allowed.

\subsubsection{Computational configuration}
We study four dimensions, each of which can change the KV state that is produced for an identical token prefix.

\parhead{Adapter identity.}
A LoRA adapter~\cite{hu2022lora} applies learned low-rank updates to a base model's weights. Because the key and value projections depend on the active weights, the same token prefix can produce different KV state under different adapters. Multi-adapter systems such as S-LoRA~\cite{sheng2024slora} and Punica~\cite{chen2024punica} multiplex adapters for different tasks or tenants over one base model, so the adapter, and hence the KV state, can vary between requests on the same worker.

\parhead{Weight precision.}
Post-training quantization~\cite{frantar2023gptq,lin2024awq,xiao2023smoothquant} reduces model-memory and compute costs by representing or executing the weights at lower precision. A serving fleet may therefore run unquantized, \texttt{fp8} or lower-bit replicas of the same logical model. These replicas can produce numerically different key and value projections even when the KV cache itself uses the same element type. Weight precision is normally fixed for a worker's lifetime.

\parhead{Weight version.}
Model updates replace a checkpoint behind a stable model name. During a rolling update, workers running the old and new checkpoints may coexist and access the same shared cache. Even when the checkpoints have the same architecture and tensor shapes, their parameters can produce different KV state for an identical prefix. Weight version is fixed for the lifetime of a worker, and it changes only across deployment events.

\parhead{KV representation.}
The stored tensors also have a worker-level representation. An engine may allocate its KV pool with different element types, such as \texttt{bfloat16} or \texttt{float16}, or apply a KV-specific format such as \texttt{fp8}. Unlike weight precision, which affects the projections themselves, the KV representation governs how computed state is stored and loaded. State written under one representation cannot be assumed compatible with a worker using another: its values may be misinterpreted or the load may fail. The effective representation is fixed when a worker initializes its KV pool.

\subsubsection{Sharing permission}
Computational compatibility alone does not make reuse permissible.

\parhead{Sharing-domain salt.}
vLLM accepts a per-request \texttt{cache\_salt} and incorporates it into the first block hash of its local prefix cache, documenting this mechanism as a defense against prefix-cache timing attacks (CVE-2025-46570)~\cite{vllm2025timing}. Requests with different salts occupy separate sharing domains and must not reuse one another's state, even when their tokens and computational configurations match. The salt changes not the computed KV state but whether another request may observe or reuse it; like the adapter, it can vary per request within one worker.

Adapter identity and sharing domain are \emph{request-level}, whereas weight precision, weight version and KV representation are \emph{worker-level}. The engine can know all five dimensions, but a shared-cache connector can bind only the provenance it receives. Because local-cache handling varies across engines and releases, local isolation does not imply that an external tier preserves the same distinctions. The interface contract of \S\ref{sec:defense} is shaped by this separation, and it binds the two levels through separate digests.

\subsection{Deployment and Threat Model}
\label{sec:threat}

We study two consequences of the same interface failure: integrity or liveness failures under ordinary operation, and confidentiality leakage across isolated request domains.

\parhead{Integrity and liveness setting (no adversary).}
Clients issue well-formed requests to a serving fleet whose workers share an external KV tier but differ in a dimension omitted from its key: adapter identity, weight precision, weight version or KV representation. Such heterogeneity arises in multi-adapter serving, mixed-precision deployment, rolling model updates and workers configured with different KV formats. A request may then retrieve state produced under a configuration other than the one it selected. Depending on the dimension, the foreign state can perturb the output, reduce task accuracy, or fail to load correctly. No malicious request is required. The failure follows from normal requests and valid worker configurations sharing one namespace.

\parhead{Confidentiality setting (cross-instance observer).}
A victim and an attacker issue requests to different engine instances that share the external KV tier. The victim supplies a sharing-domain salt intended to prevent other requests from probing its cached prefixes. The attacker does not know that salt and observes only the time to first token of its own requests. Given a set of candidate prompts, the attacker tests each candidate from another instance. A shared-tier hit can reveal that the victim previously cached the corresponding prefix. The goal is prompt identification within this candidate set, not arbitrary prompt reconstruction. Unlike prior prefix-cache timing attacks~\cite{inputsnatch,earlybird,promptpeek}, this setting requires no co-location with the victim and tests whether engine-local salt isolation survives composition with the shared tier.

\section{Provenance-Blind Reuse}
\label{sec:erasure}

Safe KV reuse requires the shared key to preserve every distinction made locally by the engine. We formalize this requirement and define provenance-blind reuse as its violation, then audit three independently implemented vLLM shared-cache connectors, all of which derive keys alike and discard the same provenance at the engine--connector boundary, showing that the failure is structural.

\subsection{Conditions for Safe KV Reuse}
\label{sec:safereuse}

Reuse of KV state is safe when the state is \emph{applicable} to the requesting configuration and the request is \emph{permitted} to use it. Both conditions must be checkable from the shared key alone, because the store decides reuse without consulting the engine, and \S\ref{sec:defense} states them as the contract's invariants.

\begin{definition}[Provenance-blind reuse]
\label{def:provblind}
Provenance-blind reuse occurs when two requests generate the same shared cache key because they have the same prompt tokens, even though they differ in configuration or sharing policy. As a result, one request may reuse the KV state produced for the other even when that reuse is invalid or not permitted.
\end{definition}

This problem is not a bug in one connector. It arises because content-addressed caches deliberately hash only prompt tokens to reuse identical prefixes across workers. Information absent from the tokens, such as adapter identity, weight version or sharing policy, is therefore lost unless explicitly included in the shared key.

\subsection{What Information Reaches the Shared Cache}
\label{sec:audit}

We audited every content-addressed connector in a current vLLM and LMCache stack, covering the three common paths for cross-instance sharing: LMCache's connector, vLLM's built-in filesystem store (\texttt{Shared\-Storage\-Connector}) and its Mooncake integration. For each, we compare the engine's context at KV computation with what the connector receives and the shared key contains.
At KV computation, the engine holds the full context: token sequence, adapter identity, salt, and weights at the current precision and version. The connector receives only token identifiers, model name, tensor-parallel world size, worker rank and KV dtype. LMCache also accepts an optional tag, but only through a nonstandard request channel rather than the engine's standard request record. None of the three receives the salt, adapter identity, weight precision distinct from KV dtype, or weight version.

\subsection{How Current Connectors Lose Provenance}
\label{sec:provloss}

Each key reflects only the inputs received. LMCache combines a hash of the token identifiers with the model name, tensor-parallel width, worker rank and KV dtype; vLLM's \texttt{Shared\-Storage\-Connector} uses their \texttt{md5}, whereas its Mooncake connector applies \texttt{blake2b} hash. These fields confine reuse within one nominal model configuration, while multimodal inputs remain safe because their content hashes enter the token identifiers upstream. All three omit adapter identity, weight precision distinct from KV dtype, weight version and tenant salt. KV-representation coverage differs by stack: LMCache already binds KV dtype, while the representation failures in \S\ref{sec:reuse} arise on SGLang, where the shared key omits both the element type and the dtype distinctions that our experiments cover.

The evidence rests on three source audits and two runtime verifications: LMCache and
vLLM's filesystem store also collide end to end, and Mooncake keys its chunks the same
token-content way by source audit. Because the three are independently developed yet
derive their keys alike, the exposure is a property of the content-addressing pattern
rather than an accident in one connector. One shared-store deployment guide instructs
operators to change the store namespace whenever weights, quantization or adapters
change, first-party evidence that the key encodes none of them.

\subsection{Why Requests Collide}
\label{sec:collisionpath}

Figure~\ref{fig:cross-adapter-example} illustrates the concrete case for adapters, and
the mechanism generalizes to every discarded dimension.
Figure~\ref{fig:provenance-boundary} traces the data flow and marks the boundary between
engine and connector at which provenance is lost.

\begin{figure}[t]
  \centering
  \includegraphics[width=0.86\columnwidth]{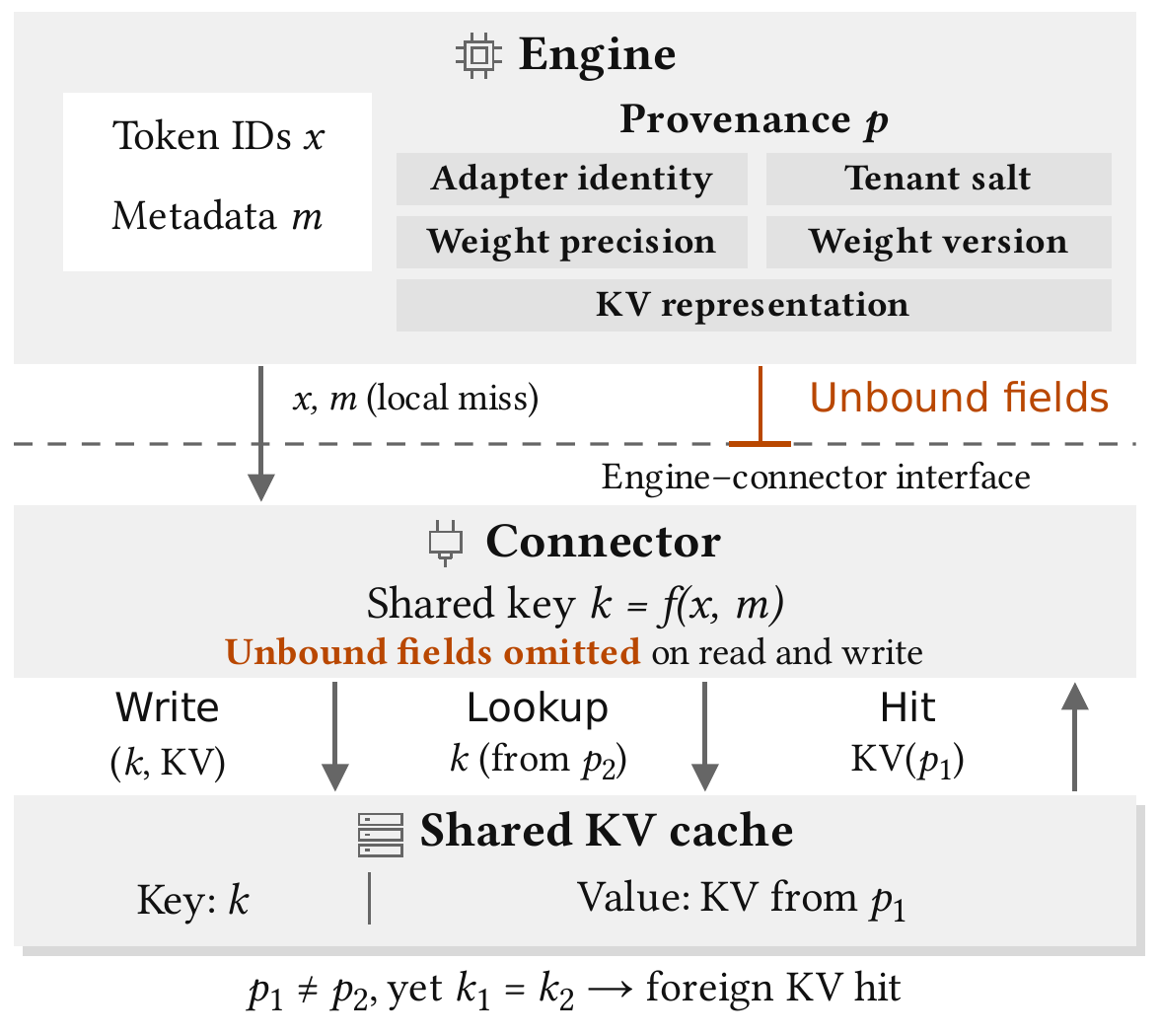}
  \caption{The unrepaired path. Provenance is lost at the engine--connector interface: adapter, salt, precision, version, and KV representation reach neither the read nor the write path of shared-key derivation, so two contexts $p_1 \neq p_2$ derive one key and the second is served the first's state. $m$ summarizes the retained model fields, which vary by connector.}
  \label{fig:provenance-boundary}
\end{figure}

When a request misses the local prefix cache, the engine passes token identifiers to the connector, which derives a key and queries the store. Discarded dimensions cross neither read nor write paths, so requests sharing tokens and nominal model derive one key, and a hit returns the first entry written. The collision lies in the connector interface rather than the store: any store accepting that key inherits ambiguity.

\section{Empirical Characterization}
\label{sec:collisions}

The previous section showed that shared-cache connectors lose provenance by design. We now ask whether cross-provenance reuse occurs, how much it degrades correctness, what causes that degradation, and how likely it becomes under the conditions of a realistic deployment.

\subsection{Characterization Setup}
\label{sec:methodology}

We measure two quantities. \emph{Confirmed cross-provenance reuse} is a nonzero \needload{} counter, an engine-internal metric reporting prefix tokens loaded from the shared store rather than recomputed, when a worker requests under one provenance tokens seeded on another GPU under a different provenance. \emph{Output divergence} is a served output differing from an isolated reference under the same configuration with the tier disabled, excluding decoding nondeterminism. Where ground truth exists, we also grade correctness (\S\ref{sec:accuracy}).

\parhead{Serving stacks and platforms.} We characterize two stacks end to end: \textbf{vLLM~0.9.2 with LMCache} and \textbf{SGLang with HiCache} on releases 0.5.20 and 0.5.8.post1. Experiments use two platforms: P1, with two RTX~6000 Ada GPUs (driver 550.163.01, CUDA~12.4), and P2, with RTX PRO 6000 and RTX PRO 5000 Blackwell GPUs of compute capability 12.0 (driver 580.173.02, CUDA~13.0). vLLM runs on P1 with a standalone cache server over the deployed transport, with scale and family checks on P2. SGLang~0.5.20 runs on P2 and 0.5.8.post1 on P1 because P1's driver cannot load the CUDA~13 build required by later releases. We hold the release fixed across model-size and family cells to avoid confounding with engine version. Correctness comparison remains within one software configuration and platform.

Unless stated otherwise, each experiment pairs a \emph{seeder} populating the store under one provenance with a \emph{retriever} requesting the same tokens under another. We compare the retriever with an \emph{isolated reference} running its configuration with the tier disabled and its own \emph{warm} baseline reading a self-seeded store, separating foreign state from reuse. Each phase runs in a separate server process, attributing effects to state written by a prior process. A cross-provenance change from a correct answer to wrong or vice versa is a \emph{flip}. We test flips with an exact-binomial McNemar test and report them as harmful (correct\,$\to$\,wrong) or reversed (wrong\,$\to$\,correct).

\parhead{Datasets and workloads.} We use public datasets when a task provides ground truth. Reuse and divergence experiments use the first human turn from 2,000 ShareGPT conversations~\cite{sharegpt}, spanning 200--4,000 characters. Incidence and fleet experiments pair each with one of eight hand-written system prompts over 200 tokens as the shared prefix. Sentiment correctness uses GLUE SST-2~\cite{socher2013sst,wang2019glue} for short reviews and IMDB~\cite{maas2011imdb} for long documents. Opposite-policy pairs train on a 300-example sample (150 per class) and evaluate on a disjoint split: the first 400 SST-2 validation sentences (300 for scale sweep and 100 for short-review experiment) and 400 IMDB test reviews (200 per class, fixed seed), using 100, 200 or all 400. IMDB reviews have markup removed and are clipped to 1,400 characters so the review and instruction fit the context. Mathematical reasoning uses 100 GSM8K~\cite{cobbe2021gsm8k} problems with few-shot prompting and exact-match grading. Extraction and constructed reasoning use passages longer than the connector's 128-token chunk with programmatically computable ground truth: 60 synthetic memoranda of about 160 tokens with span-retrieval questions, and 60 or 100 fixed-seed items of about 340 words spanning aggregation, comparison, filter-and-maximum and two-hop lookup.

\subsection{Cross-Provenance Reuse (Does It Happen?)}
\label{sec:reuse}

Cross-provenance reuse occurs on every dimension the storage key
omits, including adapter identity, weight precision, weight version, and
key-value representation, and persists at every model size tested, from
0.5\,B to 32\,B parameters.

Adapter identity is the clearest case. Two independently initialized random adapters collide and diverge on all six prompts. With fine-tuned English and French adapters, all thirty prompts collide and 25 of 30 outputs diverge from the requester's output (83\%, 95\% Wilson interval [66, 93]). A tier-disabled single-instance control keeps the adapters separate, attributing the divergence to the store: the foreign KV representation changes how the model reads the prompt while the requester still decodes. SGLang~0.5.20 partially mitigates this case by adding a per-adapter handle to the storage key (\texttt{uuid5} for launch-time adapters and a fresh \texttt{uuid4} for run-time loads). Over 200 reviews, two opposite-policy adapters stay apart: the requester loads nothing from the other's store, writes 713 entries and matches its isolated reference (0.920 vs.\ 0.920, no flips), while same-adapter reuse loads 29,312 tokens. The source scores 0.050 under the reader's inverted policy, confirming the policies oppose. Yet a name-and-path handle is blind to the bytes it names: replacing weights in place leaves the key unchanged, and the replacement worker receives the earlier adapter's state over the full 5,376-token prefix and diverges from its isolated reference (\S\ref{sec:blocking}). The handle also over-separates, so identical content at another path or dynamically loaded on two workers shares nothing. Under 0.5.8.post1, persistent reuse does not complete with an adapter loaded, leaving adapter isolation unresolved.

Weight precision and version collide through the same mechanism. Dynamic \texttt{fp8} quantization changes weights while KV dtype stays \texttt{bfloat16}, yielding identical keys: all forty prompts collide and 34 outputs diverge (85\%, [71, 93]), from rewording to factual disagreement. On vLLM's \texttt{Shared\-Storage\-Connector}, 16 of 20 prompts collide and four misses fall below the chunk size; the result also appears in SGLang's graded precision cells (\S\ref{sec:accuracy}). Weight version behaves similarly: the key identifies served path rather than content, so an in-place update preserves the key and a populated store serves superseded state to workers running new weights; all ten prompts retrieve that state and diverge. Under SGLang's full controls, changing the checkpoint at one served name consumes 960 of about 995 tokens from the other version's state, versus zero on a clean store, zero under a distinct served name and normal same-provenance reuse. KV pool element type is likewise absent and produces the same result under those controls. A KV dtype mismatch collides but never completes its load rather than silently returning foreign state (\S\ref{sec:accuracy}); vLLM binds KV dtype, so this dimension has no counterpart there.

\subsection{Correctness Impact (How Bad Is It?)}
\label{sec:accuracy}

The four dimensions span three severity bands: weight precision changes output text without degrading accuracy; adapter identity and weight version cause statistically significant, task- and model-dependent harm; and KV representation replaces the answer wholesale.

Precision is the surprising exception. On extraction, all sixty requests reuse foreign state and most outputs change, yet every answer matches the clean reference. On reasoning, the retriever receives the seeder's state over the cached prefix (mean 357 of about 440 tokens under vLLM), but no cell reaches significance ($p \geq 0.125$). Sentiment agrees: 0.935 vs.\ 0.940 at 0.5\,B and 0.970 vs.\ 0.970 at 7\,B over 200 reviews, with comparable results at 14\,B and 32\,B under vLLM.

Adapter identity, by contrast, corrupts the answer. Over 100 short reviews, an inverted adapter falls from 94\% to 64\% when served the standard adapter's KV state (32 harmful flips vs.\ 2 reversed, $p<0.001$), and reuse can return the other tenant's answer outright. On 200 IMDB reviews via vLLM runtime LoRA, accuracy falls from 0.945 to 0.715 with 28.5\% relabelled to the seeder's policy ($p<10^{-9}$). Harm depends on whether the cached prefix contains the discriminative content rather than document length. Merged-weight models show a comparable effect (0.94 to 0.60).

\begin{table}[t]
\captionsetup{skip=3pt}
\caption{Weight-version reuse across model sizes and stacks, sentiment task. Source label counts the outputs that carry the source version's answer.}
\label{tab:versionscale}
\centering
\small
\setlength{\tabcolsep}{1.5pt}
\renewcommand{\arraystretch}{0.95}
\begin{tabular*}{\columnwidth}{@{\extracolsep{\fill}}llrrrrr@{}}
\toprule
& & & \multicolumn{2}{c}{Accuracy} & Source & \\
\cmidrule(lr){4-5}
Stack & Model & $N$ & Isolated & Cross-prov. & label & $p$ \\
\midrule
vLLM & Qwen2.5-0.5B & 300 & 0.893 & 0.750 & 74/300 & $<0.001$ \\
vLLM & Qwen2.5-7B & 300 & 0.947 & 0.900 & 30/300 & $0.004$ \\
vLLM & Qwen2.5-14B & 300 & 0.957 & 0.910 & 27/300 & $0.003$ \\
vLLM & Qwen2.5-32B & 300 & 0.957 & 0.943 & 17/300 & $0.39$ \\
SGLang & Qwen2.5-0.5B & 400 & 0.902 & 0.865 & --- & $0.014$ \\
SGLang & Qwen2.5-7B & 400 & 0.943 & 0.938 & --- & $0.63$ \\
\bottomrule
\end{tabular*}
\end{table}

Weight version behaves like an adapter, with task-dependent magnitude. Under vLLM, two opposite-policy fine-tunes serve as versions merged into full checkpoints. Across 300 reviews per model from 0.5\,B to 32\,B, reuse stays near 92\% of prompts while accuracy loss falls from 14.3 points at 0.5\,B to 1.4 at 32\,B, where the difference is unresolved ($p=0.39$, Table~\ref{tab:versionscale}). On SGLang, sentiment attenuates similarly (3.7 points at 0.5\,B to an unresolved 0.5 at 7\,B), whereas reasoning rises from an unresolved 3.0 to a significant 8.0 at 7\,B. The earlier release reproduces this split at 7\,B on the other platform. At 14\,B both deltas are small and unresolved even at four times the sample size, so attenuation is not a power artifact. Harm does not grow with model size on either graded task.

Harm is robust to training randomness: three independently trained 7\,B pairs all decline significantly, with net harmful flips in every case. The effect is symmetric (standard retrieving inverted: 0.97 to 0.90, $p<0.02$) and reproduces across families (Llama-3.1-8B, Mistral-7B). Released checkpoints also collide: a base model seeds and its instruction-tuned sibling retrieves foreign state on 18 of 20 prompts, yet scores 0.94 both ways. On GSM8K, reuse fires on 96\% of problems but accuracy moves nonsignificantly (0.76 to 0.71), because the short question caches about seventy tokens and the answer-determining reasoning is recomputed. Harm appears only where versions disagree on the answer the cached prefix determines.

At the other extreme, changing the KV pool element type replaces the answer outright. Sentiment falls to about 0.47 across 0.5\,B--14\,B (from 0.945--0.975), with 298 of 299 paired flips harmful. On reasoning, collided accuracy is zero at every size and family tested; the widening gap reflects a rising isolated reference rather than a worsening collision. A KV dtype mismatch depends on storage policy: under the waiting default, the foreign-page load never completes, a liveness failure; under the permissive policy, the engine declines the entry and matches the isolated reference. An own-store control loads 19,200 tokens and remains healthy, which confirms that the permissive policy stays safe only because it declines to load the foreign entry at all.

\subsection{Causal Analysis (Why Does It Happen?)}
\label{sec:causal}

Comparing tasks and dimensions confounds the task, the data and the adapters, so two controlled experiments isolate the cached state itself as the direct cause of the loss.

Holding reviews, model and adapters fixed, a dose-response experiment varies the foreign fraction of the cached prefix while the retriever recomputes the remainder. Accuracy falls from 0.92 with no foreign state to 0.76 at one quarter (33 tokens), then remains steady through one half (95), three quarters (158) and the whole prompt (220), while emission of the seeder's label rises from 8\% to 25\%. Because only the foreign fraction changes, cached state is causal, and the early saturation shows that a partial foreign prefix carries the label-determining framing. The zero-to-quarter step is significant (17 flips vs.\ 1, $p<10^{-3}$), whereas a position control that fixes the foreign token count and moves sentiment-bearing content between cached and recomputed portions is only marginal ($p{=}0.06$), confirming that foreign state rather than its position causes the harm.

A complementary threshold experiment makes the firing condition mechanical: the connector stores fixed-size chunks, so prefixes below one chunk never collide, whereas those at or above it collide whenever shared. Sweeping a system prompt across two adapters, a sub-chunk prefix serves no cross-adapter state, while prefixes of roughly 100--1,000 tokens serve tens to several hundred foreign tokens, one chunk at a time. SGLang exhibits the same threshold at one storage page. Production shared prefixes span hundreds of tokens, so collisions fire deterministically whenever such a prefix crosses a provenance boundary.

\subsection{Deployment Incidence (How Likely in Realistic Deployments?)}
\label{sec:fleet}

The cross-provenance rate in production lies between two tight bounds and rises toward the ceiling precisely when the cache tier is doing the job it was deployed for.

We bound the rate using eight production-style system prompts shared by \texttt{bfloat16} and \texttt{fp8} tenants, each paired with a real user query. A two-phase seed-then-replay construction serves cross-provenance state to 167 of 200 requests (83.5\%), an upper bound because every replayed prefix is warm under the other class; an interleaved stream with one always-warm worker per class falls to 3 of 200 (1.5\%), as workers reach the store only on a cold first touch. The realistic rate depends on the fleet's reuse distribution, which is skewed and capacity-sensitive~\cite{wang2025kvcachewild}; because existing traces lack provenance, these bounds are constructed rather than derived from a production trace.

Fleet diversity and routing shape the realized rate. On a live four-worker fleet (two per precision) sharing one store over 600 requests and 32 system prompts, load-balanced routing produces 10\% cross-provenance reuse while sticky routing almost never does. As prefix diversity rises, the load-balanced rate grows from 3\% at 8 prefixes to 8\% at 32 and 16\% at 128 because more prefixes overflow local caches and send cold touches to the store; sticky routing remains at zero until diversity exceeds local capacity, reaching 4\% at 128. A joining or restarting worker reads the store on every request from a cold local cache and receives foreign state on all of them when another provenance has populated the store, a routine condition during autoscaling or rolling updates (\S\ref{sec:recovery}).

\subsection{Confidentiality Impact (Can It Leak Information?)}
\label{sec:confidentiality}

The same omission, seen through a confidentiality lens, turns the shared tier into a cross-instance side channel that defeats the isolation control the engine already deploys.

With the tier disabled in one instance, a request with a different salt or none misses a salted victim's blocks, confirming documented engine-local isolation. Enabling the tier serves that state across instances to an unsalted or differently salted request, whose \needload{} confirms a store load on both connectors tested end to end. The bypass reproduces across seven families from 0.5\,B to 14\,B (Qwen 0.5\,B--14\,B, Llama-3.1-8B, Mistral-7B, Phi-3.5-mini, Yi-1.5-6B, Falcon3-7B and SmolLM2-1.7B), always serving the salted victim's state to the unsalted or differently salted attacker while preserving same-salt reuse in every family.

Composed end to end, the bypass enables a concrete attack: an attacker timing
$N$ candidate prompts against a salted victim identifies the victim's prompt
perfectly up to ten candidates and at 93\% for $N{=}100$, the exact scenario
the salt was introduced to prevent. Saturating load on the store closes the
channel, while a faster transport widens the gap.

One dimension admits a release-level fix. SGLang~0.5.20 isolates salted requests in both tiers: a second sharing domain presenting identical tokens consumes nothing from either and receives the clean-store output. On 0.5.8.post1 the same test reuses the first domain's 1\,024 cached tokens. Salt behaviour is therefore a release-level property. Meanwhile, none of the audited vLLM connectors binds the salt, which is opt-in and off by default, and one commercial serverless offering documents a fleet-wide shared cache that offers no isolation between separate customer accounts.

Across both stacks these are one gap and not five: every failure above follows
from a shared key that carries token content and coarse model identity while
the engine, and only the engine, holds the provenance that key would need.
Binding one dimension, as a release may, leaves the others open. What the
interface lacks is a contract for that provenance.

\section{KV Provenance Contract and Design}
\label{sec:defense}

Every failure in \S\ref{sec:collisions} stems from a shared key that encodes tokens but not their production context. This section formalizes the missing guarantee as a parametric contract over a declared dimension registry, gives a canonical realization extensible without connector-specific key changes, binds request- and worker-level provenance, maps the coverage boundary, and defines a differential omission test for any dimension that is left unbound.

\subsection{The Provenance Contract}
\label{sec:contract}

\parhead{Provenance dimensions.}
Let $\mathcal{D} = \{d_1, \ldots, d_n\}$ be a finite \emph{dimension registry},
where each dimension~$d_i$ has a value domain $\mathrm{Dom}(d_i)$ and an
\emph{identity function} $\mathrm{id}_i : \mathrm{Dom}(d_i) \to \mathcal{I}$ that maps
concrete runtime values to identities in a common identity space~$\mathcal{I}$.
A dimension is \emph{well-formed} when its identity function is injective and
cross-worker stable: for all workers $w, w'$ and values $v \in \mathrm{Dom}(d_i)$,
$\mathrm{id}_i^{\,w}(v) = \mathrm{id}_i^{\,w'}(v)$.
A \emph{provenance configuration} over~$\mathcal{D}$ is a valuation
$c \in \prod_{d \in \mathcal{D}} \mathrm{Dom}(d)$. Write $KV(x,c)$ for the
state a worker produces for tokens~$x$ under~$c$, and $\mathrm{Key}(x,c)$ for the shared
key it derives.

\begin{definition}[KV Provenance Contract $\mathcal{C}(\mathcal{D})$]
\label{def:contract}
Given a well-formed dimension registry~$\mathcal{D}$, a key function $\mathrm{Key}$
satisfies the \emph{KV provenance contract} $\mathcal{C}(\mathcal{D})$ iff for all token
sequences~$x$ and configurations $c_1, c_2$ over~$\mathcal{D}$:

\smallskip\noindent
\emph{(i) Computational invariant.}
Configurations producing different state derive different keys:
\begin{equation}
KV(x,c_1) \neq KV(x,c_2)
  \;\Longrightarrow\;
  \mathrm{Key}(x,c_1) \neq \mathrm{Key}(x,c_2).
\label{eq:computational}
\end{equation}

\noindent
\emph{(ii) Permission invariant.}
Configurations not permitted to share derive different keys:
\begin{equation}
\neg\,\mathrm{ShareAllowed}(c_1,c_2)
  \;\Longrightarrow\;
  \mathrm{Key}(x,c_1) \neq \mathrm{Key}(x,c_2).
\label{eq:permission}
\end{equation}
\end{definition}

The contract is parametric in~$\mathcal{D}$: it specifies what bound dimensions must
satisfy, not which dimensions exist. It is stated over \emph{declared}
dimensions, because binding stable identities is what an implementation can do, while
deciding equivalence of arbitrary KV-generating functions is not.

\subsection{Canonical Realization and Extensibility}
\label{sec:realization}

Our implementation realizes $\mathcal{C}(\mathcal{D})$ through a \emph{canonical
encoding}: an injective, length-prefixed encoder
$\mathrm{enc} : \prod_{d \in \mathcal{D}} \mathcal{I} \to \{0,1\}^*$ paired with a
collision-resistant hash~$H$, and the two together yield the shared key
\begin{equation}
\mathrm{Key}(x,c)
  \;=\;
  f\!\bigl(x,\;
  H(\mathrm{enc}(
    \mathrm{id}_1(c(d_1)), \ldots, \mathrm{id}_n(c(d_n))
  ))\bigr),
\label{eq:realization}
\end{equation}
where $f$ is the connector's native key function. The realization assumes that,
for fixed~$x$, $f(x,\cdot)$ preserves distinct provenance digests; if it hashes
them again, its collisions are an additional failure event. Hash-based
identities and keys approximate the ideal contract only up to the probability that the hash itself collides.

\begin{proposition}[Dimension Extension]
\label{prop:extension}
Let\/ $\mathrm{Key}$ realize $\mathcal{C}(\mathcal{D})$ via a canonical encoding, and assume $f(x,\cdot)$ preserves distinct provenance digests. For any well-formed new dimension $d_{n+1}$, extending\/ $\mathrm{enc}$ by appending $\mathrm{id}_{n+1}$ satisfies the extended contract $\mathcal{C}(\mathcal{D} \cup \{d_{n+1}\})$, except with negligible hash-collision probability under the assumed hash security.
\end{proposition}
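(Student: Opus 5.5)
The argument reduces both invariants of $\mathcal{C}(\mathcal{D}\cup\{d_{n+1}\})$ to a single statement: for fixed $x$, distinct configurations over the extended registry yield distinct keys, except on a hash-collision event. Both (\ref{eq:computational}) and (\ref{eq:permission}) take the form ``premise $\Rightarrow \mathrm{Key}(x,c_1)\neq\mathrm{Key}(x,c_2)$.'' Each premise forces $c_1\neq c_2$, because equal configurations produce equal state $KV(x,c)$ and are trivially permitted to share. This relies on the modelling assumptions that $KV(x,\cdot)$ is a function of the configuration and that $\mathrm{ShareAllowed}$ is reflexive; I will state both explicitly. It therefore suffices to show that the extended key is injective in $c$ for each fixed $x$.

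The injectivity argument is a chain of injective maps. Suppose $c_1\neq c_2$ over $\mathcal{D}'=\mathcal{D}\cup\{d_{n+1}\}$. Then they differ on some dimension $d_j$ with $j\le n+1$. Well-formedness makes $\mathrm{id}_j$ injective, so the identity tuples differ in coordinate $j$. The case $j=n+1$ is the only new one, and it is covered because $d_{n+1}$ is assumed well-formed.

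Cross-worker stability matters here. The key computed at worker $w$ for $c_1$ and at worker $w'$ for $c_2$ uses $\mathrm{id}^{w}$ and $\mathrm{id}^{w'}$, and stability lets me replace both by a single function. Without it, injectivity per worker would not be enough.

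Next, the appended encoder $\mathrm{enc}'(\iota_1,\dots,\iota_{n+1})=\mathrm{enc}(\iota_1,\dots,\iota_n)\,\|\,\mathrm{lp}(\iota_{n+1})$ is injective, where $\mathrm{lp}$ is the length-prefixed encoding of a single field. The argument is the standard unique-parsing one: a length-prefixed concatenation can be parsed left to right without ambiguity, so equal encodings force equal tuples. I also need to fix one convention carefully. The original $\mathrm{enc}$ must be self-delimiting, or carry an arity or count prefix, so that an $n$-tuple encoding cannot be confused with an $(n{+}1)$-tuple encoding. This is where length-prefixing is essential.

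Distinct encodings then give distinct digests under $H$ except on a collision event. Finally, the stated assumption on $f(x,\cdot)$ carries distinct digests to distinct keys.

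The main obstacle is making the ``negligible probability'' clause precise rather than hand-waved. I would phrase it as a reduction. Any efficient process, such as an adversary or workload choosing configurations, that produces $c_1\neq c_2$ with $\mathrm{Key}(x,c_1)=\mathrm{Key}(x,c_2)$ yields two distinct strings $\mathrm{enc}'(\cdot)$ with equal $H$ values, since every earlier step was exactly injective. Its success probability is therefore bounded by the collision advantage against $H$, which is negligible by assumption. For a non-adversarial fleet with $q$ distinct configurations, the birthday bound $q^2/2^{|H|}$ gives a concrete figure.

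A secondary subtlety is that the connector-specific $f$ never changes. The extension touches only $\mathrm{enc}$, which is precisely the extensibility claim.
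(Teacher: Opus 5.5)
Your proposal is correct and follows the same route as the paper's proof sketch. The shared steps are: injectivity of the length-prefixed encoding survives appending a well-formed field; distinct encodings give distinct digests outside a hash-collision event; the assumption on $f(x,\cdot)$ carries that distinction into the key; and cross-worker stability comes from well-formedness of the extended registry. You also make explicit two steps the sketch leaves implicit, namely that both invariants reduce to injectivity of $\mathrm{Key}(x,\cdot)$ (using that $KV(x,\cdot)$ is a function and $\mathrm{ShareAllowed}$ is reflexive), and that any contract violation yields an $H$-collision, which sharpens the paper's argument without changing it.
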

\begin{proof}[Proof sketch]
Injectivity of length-prefixed encoding is preserved under field extension: if two
configurations differ on any dimension, their encodings differ. Conditioned on
no hash collision, their digests differ, and the assumption on~$f$ preserves
that distinction in the shared key. The only probabilistic step is excluding
hash collisions; collision resistance does not make~$H$ injective.
Cross-worker stability follows from the well-formedness of the extended registry.
\end{proof}

Extension requires only one registry entry and populated field through the existing serialization; connector-specific key logic remains unchanged because connectors consume the digest rather than individual fields. The current instantiation binds adapter, salt, weight precision and weight version, and additionally binds the resolved KV representation when the native shared key does not already encode it.

The contract also assigns responsibility: the engine alone holds request context and must supply its known provenance, while the connector alone builds the key and must bind that provenance at every access site without inferring safe reuse from equal tokens. We carry provenance as one descriptor with request- and worker-level halves, both serialized by the same length-prefixed encoder over one registry and reduced by the same digest, so neither is a special case.

\subsection{Binding Mechanism}
\label{sec:bindingmech}

The canonical realization reduces the contract to computing a digest over dimension
identities, but in a running system those identities must reach every site that derives a key.

\parhead{Worker-level digest.}
Weight precision, a weight-version identifier and the key-value representation are fixed
for a worker's lifetime, so we digest them once at startup and fold the result into the
model identity the connector already carries. The values digested are
the ones the worker resolved, not the ones it was launched with. An engine that accepts
\texttt{auto} for a dtype decides the element type at pool allocation, so identically
launched workers can resolve differently. Binding the launch string would therefore merge configurations
that compute different state and split ones that agree. The
SGLang integration reads the resolved value and keeps the launch string as metadata that never
reaches a key.

\parhead{Request-level tags.}
The adapter identity and salt vary per request and travel with it.
For the salt we bind the value the request carries. For the adapter we bind an identity of
the object rather than a process-local handle: on vLLM, the identity is the adapter name,
stable across workers; on SGLang, the integration fingerprints the
adapter's contents (every tensor's name, dtype, shape and bytes, together with the
semantically meaningful configuration fields), while the engine's per-adapter
handle indexes a registry of those fingerprints. Content
identity ensures that the same bytes under a second path share state, while
replaced bytes at the same path do not. The binding is
inert for a request carrying neither adapter nor salt, so such a request keys exactly as
unpatched and the worker-level digest is undisturbed. The two halves are independent of one another, and they compose without interfering with each other.

\parhead{Cross-site consistency.} Both halves compose into a single descriptor, and that descriptor has to be available at every site that touches the store, or the computational invariant~\eqref{eq:computational} holds at one site and fails at the other. The shared store is accessed at two sites: the \emph{scheduler}, which looks up state before prefill, and the \emph{worker}, which stores it after. The scheduler holds the request record with the adapter name and the salt, while the worker holds the weights and the computed state but its copy of that record carries neither reliably. Injecting provenance at one site only breaks same-provenance reuse silently, because the store site then derives a key the lookup site cannot reproduce. In the implementation the worker's request record carries the salt and the adapter identity, so the lookup site and the store site both derive their keys from the same descriptor value.

\parhead{Connector integration.}
Given a consistent descriptor, the remaining step is embedding it into each connector's
native key format. What reaches the connector is the descriptor's canonical digest rather
than the field values, so connectors share the provenance encoding, not the key itself.
LMCache carries the worker digest in the model identity and the request digest in one
namespaced tag, while vLLM's filesystem and Mooncake stores fold the worker material into
their \texttt{md5} and \texttt{blake2b} inputs. SGLang HiCache concentrates the choice in
one place: every storage chain it builds, for lookup, write, promotion and restart alike,
descends from a single namespace seed, so the integration replaces that seed and reaches
all four without a per-site change. The seed that digest replaces on SGLang is the one the
engine already uses to separate sharing domains, so the native salt keeps working and any
non-adapter component of the key's extra field travels through the descriptor verbatim.
Within each integration, lookup and store bind the same descriptor, so site-specific encodings cannot cause silent disagreement.

One connector's wire format constrains how the digest is encoded rather than how much
provenance is bound. LMCache packs its key into a fixed 150-byte field, which a
128-bit hex digest appended to an unbounded model path would overrun.
Because the digest width is a security parameter, the encoding keeps 128 bits but shortens
the representation: twenty-two base64 characters for the digest and a fixed-length hash over
the model identity and worker provenance. A representative key is 89 characters regardless of path length, so the budget holds by construction.

\subsection{Differential Omission Testing}
\label{sec:omissions}

A parametric contract is testable per dimension. For each $d_i \in \mathcal{D}$, the
\emph{differential omission test} holds token sequence~$x$ and all dimensions
except~$d_i$ fixed, varies~$d_i$ to a distinct value~$v'$, and reports an omission iff
\begin{equation}
\small
KV(x, c) \neq KV(x, c[d_i \mapsto v'])
\;\;\wedge\;\;
\mathrm{Key}(x, c) = \mathrm{Key}(x, c[d_i \mapsto v']).
\label{eq:omission}
\end{equation}
An omission at~$d_i$ directly falsifies the computational
invariant~\eqref{eq:computational}. The permission
invariant~\eqref{eq:permission} is tested analogously, replacing KV inequality with
$\neg\,\mathrm{ShareAllowed}$.

The offline form compares real attention state on one machine: a model runs on CPU in
float32 and its prefill keys and values are compared layer by layer, so two runs of one
configuration agree bitwise. On Qwen2.5-0.5B it reports all five supplied dimensions as
omissions under the unpatched content-only key and one, the deliberately held-out rotary
base, under the bound key, the largest absolute difference between corresponding cached
tensors being 1.03 for the adapter, 6.54 for precision and 2.14 for the weight version,
while the sharing domain changes no state and is decided by the permission oracle.
Disabling one binding reopens exactly that dimension and no other. Registering the
held-out dimension routes it through the existing encoder and key derivation, one registry
entry and no new key-derivation code, after which the checker reports no omission over
this space: it evaluates the declared candidates it is supplied and is not a discovery
oracle for an unenumerated configuration space. \S\ref{sec:omissionlive} runs the same test
inside a serving engine, where the state is what two workers wrote into the shared store
and the keys are the names the store gave those entries.

\section{Evaluation of Provenance Binding}
\label{sec:eval}

This section evaluates whether provenance binding blocks unsafe reuse while preserving legitimate sharing (\S\ref{sec:blocking}), whether its dimensions act independently (\S\ref{sec:ablation}, \S\ref{sec:omissionlive}), and how it performs under rolling updates, cold starts and practical resource costs (\S\ref{sec:recovery}, \S\ref{sec:costs}).

We evaluate two production stacks: vLLM~0.9.2 with LMCache and its filesystem store, and SGLang~0.5.20 with HiCache, using the models, datasets and seeder/retriever protocol of \S\ref{sec:methodology}. Bound and unbound conditions otherwise match; workers restart with cold local caches, and every timed retrieval is a confirmed store load. Across all experiments, the binding eliminates every identified unsafe reuse case, preserves legitimate sharing, adds no resolvable hit-path latency, and incurs storage in proportion to the provenance diversity that a deployment actually presents.

\begin{figure*}[t]
  \centering
  \includegraphics[width=\textwidth]{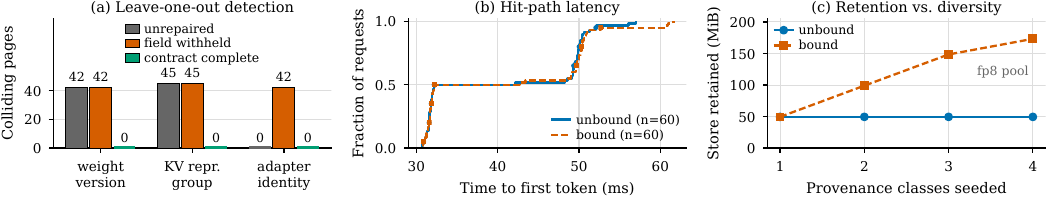}
  \caption{Provenance binding on SGLang~0.5.20 with HiCache, 0.5\,B. In (a)~leave-one-dimension-out, bars count store pages with differing bytes under an identical key. (b)~Time to first token on confirmed store hits, two runs of 30 requests per arm ($n=60$ per arm). (c)~Store retention against the number of provenance classes seeded.}
  \label{fig:repair}
\end{figure*}

\subsection{Correctness of Provenance Binding}
\label{sec:blocking}

The binding covers five dimensions: four \emph{computational} dimensions that change cached state (adapter identity, weight precision, weight version and key-value representation), and one \emph{permission} dimension, the tenant salt, which can prohibit sharing even when state is identical.

For each dimension, a 0.5\,B worker pair seeds the same tokens under one configuration and retrieves them under another through LMCache and vLLM's filesystem store. Binding reduces all five cross-configuration cases to zero foreign tokens while preserving full-prefix same-provenance hits (624 tokens on the filesystem store and 768 under LMCache); disabling it restores every cross case to a hit, confirming that the binding causes the misses.

The dimensions also compose: a seeder differing in precision and weight version reuses under the original key but misses under the bound key, reducing reuse from 91\% to zero. When all five dimensions differ, all twelve 360-token prompts receive foreign state under the original key and none under the bound key.

\parhead{Paired matrix on SGLang.} 
Table~\ref{tab:sglmatrix} evaluates thirteen SGLang provenance cases in bound and unbound arms at 0.5\,B, using forty documents and 5,376 prefix tokens per read. Each case runs five separate-process phases: a no-store reference, source seed, cross-configuration read, reader seed and warm read verifying the store. A bound cross-read is safe only with zero foreign tokens and a completed response matching the reference; empty or failed responses do not qualify. Twenty-five of twenty-six arms pass, except the unbound run-time adapter case, where identical adapter bytes receive different engine handles across processes.

\begin{table}[t]
\captionsetup{skip=3pt}
\caption{Paired SGLang~0.5.20 HiCache matrix at 0.5\,B. \emph{Foreign}: reads the other configuration's state; \emph{safe miss}: zero foreign tokens and a clean-reference response; \emph{hit}: same-provenance reuse; \emph{isolated}: the native key separates configurations. $\dagger$~incomplete response; $\ddagger$~no self-reuse.}
\label{tab:sglmatrix}
\centering
\small
\setlength{\tabcolsep}{2pt}
\renewcommand{\arraystretch}{0.95}
\begin{tabular*}{\columnwidth}{@{\extracolsep{\fill}}lll@{}}
\toprule
Case & Unbound & Bound \\
\midrule
Weight version & foreign & \textbf{safe miss} \\
Weight precision & foreign & \textbf{safe miss} \\
Key-value element type & foreign$^\dagger$ & \textbf{safe miss} \\
Key-value dtype & load failure & \textbf{safe miss} \\
Version and element type & foreign$^\dagger$ & \textbf{safe miss} \\
Adapter replaced in place & foreign & \textbf{safe miss} \\
\addlinespace[2pt]
Same provenance & hit & hit \\
Adapter across restart & hit & hit \\
Adapter at another path & no reuse & \textbf{hit} \\
Adapter, run-time handles & no reuse$^\ddagger$ & \textbf{hit} \\
\addlinespace[2pt]
Tenant salt & isolated & isolated \\
Different adapter content & isolated & isolated \\
Adapter, different salt & isolated & isolated \\
\bottomrule
\end{tabular*}
\end{table}

Five unbound cases load the full 5,376-token foreign prefix, with incomplete responses in the two element-type cases marked in Table~\ref{tab:sglmatrix}. The sixth case, key-value dtype, fails to complete the mismatched load; it is a load failure, not a completed foreign-prefix read. Binding turns all six into safe misses with zero foreign tokens and responses matching the clean reference, while their own warm reads still load all 5,376 tokens. Binding also recovers legitimate sharing lost by the engine: identical adapter content at another path or under different run-time handles reuses the full prefix only when bound, because adapter identity comes from content rather than engine-internal handles.

\subsection{Ablation Study}
\label{sec:ablation}

To distinguish independent contributions from an all-or-nothing effect, we combine the two request-level dimensions, adapter and salt, in one workload and enable their bindings separately. A seeder writes each prefix under one adapter--salt pair, and a retriever requests the same tokens under four provenance combinations (Table~\ref{tab:leaveoneout}).

\begin{table}[t]
\captionsetup{skip=3pt}
\caption{Leave-one-binding-out on a combined adapter--salt workload. Cells report prefix tokens loaded from the shared store (\needload{}), including legitimate same-provenance reuse. Each binding zeros its single-dimension cross case; both zero every cross case. The 16-token residuals are discussed in the paragraphs that follow.}
\label{tab:leaveoneout}
\centering
\small
\setlength{\tabcolsep}{2pt}
\renewcommand{\arraystretch}{0.95}
\begin{tabular*}{\columnwidth}{@{\extracolsep{\fill}}lrrrr@{}}
\toprule
& Same & \multicolumn{3}{c}{Cross-provenance} \\
\cmidrule(l){3-5}
Binding & provenance & Adapter & Salt & Both \\
\midrule
none          & 656 & 640          & 672         & 672 \\
adapter only  & 656 & \textbf{0}   & 672         & 16 \\
salt only     & 656 & 640          & \textbf{0}  & 16 \\
both          & 656 & \textbf{0}   & \textbf{0}  & \textbf{0} \\
\bottomrule
\end{tabular*}
\end{table}

With neither binding active, the original key leaks 640 cross-adapter and 672 cross-salt tokens. Adapter-only binding zeros the former but leaves 672 cross-salt tokens; salt-only binding does the converse, leaving 640; and both zero every cross case while preserving same-provenance reuse at 656. These single-dimension cases demonstrate that the two bindings contribute independently of each other.

When both dimensions differ, either single binding leaves one 16-token chunk. The four conditions share a store, so these counts can include template state written by earlier retriever requests. In an adapter-only fresh-store control with the seeder as the only prior writer, the cross-provenance read loads zero tokens; subsequent requests under the retriever's own adapter show 16-token template hits with zero store loads. This supports template self-reuse as the explanation for the residual, but does not independently establish the salt-only residual's provenance. We therefore report the observed store-load counts, and we do not classify either of the two residuals as state carried over from a foreign provenance.

\subsection{Omission Detection}
\label{sec:omissionlive}

The ablation checks what a retriever receives; a complementary store-side test asks whether configurations producing different cached state share a key, directly testing whether the binding omits any required dimension.

Two workers differing in one dimension write the same documents into one store, and a checker flags matching keys whose page bytes differ, indicating an omitted binding. Figure~\ref{fig:repair}(a) reports these collisions for the unbound, leave-one-dimension-out and complete bindings. Withholding weight-version identity collides on 42 of 45 pages, key-value representation (element type, dtype and layout, which change together with launch dtype) on all 45, and adapter identity on 42 of 43. Restoring each withheld dimension reduces its observed collision count to zero. This validates the tested bindings for the supplied configuration changes and pages; it does not establish completeness over untested values or an unenumerated configuration space.

\subsection{Deployment Validation}
\label{sec:recovery}

Rolling updates and cold starts make unsafe reuse most likely: old-version workers warm the store before new ones read it, while newly launched workers consult the shared store on every request until their local caches warm.

\parhead{Paired rollout.} An old-version worker warms the store before old- and new-version traffic is interleaved at new-version fractions from one quarter to all, with a fixed seed giving the bound and unbound arms the same request stream. Binding eliminates stale serving at every ratio. At one-quarter new traffic, the unbound stale-fetch rate is 0.24 and accuracy is 0.87, versus zero and 0.92 when bound; at three quarters, accuracy is 0.89 versus 0.94. The effect tracks the stale rate and is largest early, while old-version state dominates the store.

\parhead{Cold-start recovery.} A worker joining or restarting mid-rollout has a cold local cache and reads the shared store on every request. If the old version warmed that store, all 100 requests receive stale state and accuracy is 0.73; binding restores it to 0.92 by preventing the cold worker from reading old-version entries. Cold starts during rollouts therefore yield the binding's largest correctness benefit.

\subsection{Overhead Analysis}
\label{sec:costs}

A binding must also be affordable at serving scale. It adds a one-time startup digest for worker-level dimensions (weight precision, weight version and key-value representation) and a per-request digest for request-level dimensions (adapter identity and tenant salt). We measure hit-path latency, sustained throughput, startup time and storage.

\parhead{Hit-path latency.} On vLLM, across three counterbalanced repeats of 50 requests per condition, median time to first token changes by $-0.30$~ms (36.7 vs.\ 36.4~ms) and p90 by $-0.5$~ms, within the 3.0~ms between-repeat spread. On SGLang, four runs in bound--unbound--unbound--bound order give two runs per arm, each with 30 confirmed 4,352-token store loads ($n=60$ per arm), and produce modes near 31.6 and 50~ms (Figure~\ref{fig:repair}b). The pooled mean shifts by $+0.34$~ms, and the lower and upper modes by $+0.03$ and $+0.64$~ms, against 3.0~ms of between-repeat variation. Neither engine shows a resolvable penalty at our measurement scale: the request-level digest takes 12.7~$\mu$s, versus tens of milliseconds for a store round trip.

\parhead{Sustained throughput.} With two same-provenance 7B vLLM workers cross-seeding and retrieving at concurrency 25, all 900 requests per condition are confirmed 384-token store fetches on both workers. Across three counterbalanced repeats, the bound and unbound distributions overlap within between-repeat variation, so no penalty in sustained throughput is resolvable at this concurrency.

\parhead{Startup costs.} Fingerprinting a 40~MB adapter takes 1.06~s cold and 0.060~s warm per load; weight-version metadata fingerprinting takes 0.1~ms. Content-hashing a 0.5B checkpoint takes 1.35~s and scales with checkpoint size. These one-time costs amortize over the worker's lifetime.

\parhead{Storage growth.} Binding separates entries that the original key unsafely deduplicates, so storage scales with provenance diversity. For twelve prefixes on vLLM's filesystem store, unbound size remains 214~MB, while bound size grows across four classes from 214 to 428, 642 and 856~MB. For twenty documents on SGLang's file backend, the corresponding counts are a flat 66 pages unbound and 66, 132, 198 and 264 pages bound (Figure~\ref{fig:repair}c). The fourth class adds 24.75 rather than 49.5~MiB because it uses an \texttt{fp8} pool, so footprint reflects element width. Growth is linear and is the minimum needed to keep provenance classes separate.

\parhead{Eviction under fixed budget.} Under a fixed budget, provenance separation increases eviction pressure. On vLLM, unbound deduplication retains one fitting copy, so latency and throughput remain flat. With binding, exceeding capacity causes thrashing: time to first token rises from 0.024 to 0.065~s and throughput falls from 42 to 15~req/s before plateauing. This is the cost of eliminating unsafe deduplication; provisioning for expected provenance diversity avoids it.

\section{Discussion}
\label{sec:discussion}

\parhead{Interface implications.}
Each layer is locally correct: the engine binds provenance into its key, while the shared tier uses content addressing to deduplicate. The failure lies at the seam, where the tier re-derives a key from tokens without the engine's context. Any request policy, such as a rate class, region or data-residency tag, is likewise lost unless carried through. We therefore bind provenance into the key rather than detect collisions after they produce wrong output.

\parhead{Scope and Generality.}
The failure belongs to the interface, not any single connector or stack: all audited connectors key alike and both characterized stacks reuse across provenance boundaries (\S\ref{sec:reuse}). Omitted dimensions vary: SGLang~0.5.20 binds the tenant salt whereas 0.5.8.post1 does not, and vLLM binds the key-value dtype. Adapter identity offers the sharpest contrast: every audited vLLM connector omits it, while SGLang~0.5.20 uses a name-and-path handle that separates named adapters but not versions under one name. Because harm varies, we grade correctness only where ground truth exists (\S\ref{sec:accuracy}). Incidence uses a controlled two-phase workload because production traces lack provenance labels. Throughput and eviction are measured on vLLM~0.9.2.

Our measurements use single-host workers accessing a standalone cache server, not a multi-host fleet. Generalization depends on architectural compatibility (cross-version reuse requires cached tensors to fit the retriever and versions to disagree on the cached prefix's answer, so opposite-policy checkpoints show substantial harm while agreeing ones do not) and the identity chosen per dimension, which the differential checker validates against exact KV state.

\section{Related Work}
\label{sec:related}

\parhead{LLM serving and shared KV caching.}
vLLM~\cite{kwon2023efficient} introduced PagedAttention and prefix caching; SGLang~\cite{zheng2024sglang} uses a radix tree. LMCache~\cite{cheng2025lmcache} and Mooncake~\cite{qin2025mooncake} share KV across instances through external stores, while CachedAttention~\cite{gao2024cachedattention} retains inactive sessions hierarchically. Disaggregated serving~\cite{zhong2024distserve,hu2024tetriinfer,patel2024splitwise} makes cross-instance KV transfer first-class; CacheGen~\cite{liu2024cachegen} compresses and streams it, MemServe~\cite{hu2024memserve} and D\'{e}j\`{a}Vu~\cite{strati2024dejavu} manage distributed pools, and DroidSpeak~\cite{liu2024droidspeak} reuses KV across models sharing an architecture. They assume matching tokens imply matching KV state, which fails across workers that differ in adapter, in precision or in weight version.

\parhead{When is cached state reusable.}
Prompt Cache~\cite{gim2024promptcache} reuses state for modular non-prefix segments under an author-declared schema; CacheBlend~\cite{yao2025cacheblend} reuses non-prefix chunks and recomputes a few tokens to repair cross-attention omitted by reuse. Both address position within one engine and model configuration, assuming producer and consumer configurations match. Our question is orthogonal: state can be positionally valid yet produced by another configuration.

\parhead{Cache timing side channels and isolation.}
InputSnatch~\cite{inputsnatch}, EarlyBird~\cite{earlybird} and PromptPeek~\cite{promptpeek} recover prompts by timing prefix-cache hits within the victim's instance, which the salt is meant to prevent. We show that a shared tier discarding the salt extends this leak to a remote attacker on another instance. SpliceLeak~\cite{spliceleak} studies a non-prefix-fusion side channel in the same tier but requires co-location and does not concern the salt. KVGov~\cite{kvgov} binds a per-principal keyed hash into the block hash; PrefixWall~\cite{prefixwall}, SafeKV~\cite{safekv} and CachePrune~\cite{cacheprune} isolate or selectively share prefix state within one engine. None considers a cross-instance tier that re-derives its key without the context these defenses assume.

\parhead{KV-state integrity, adapters and models.}
Cache manipulation steers outputs through writes~\cite{whosenarrative} or faults in shared blocks~\cite{bitflip}, requiring adversarial write or fault capability; our collisions instead serve correctly computed state under the wrong provenance. Multi-adapter systems~\cite{sheng2024slora,chen2024punica,wu2024dlora} batch LoRA requests and migrate adapters between replicas, requiring stable cross-worker identities while relying on per-instance isolation that shared tiers do not replicate. The issue is recognized in practice: an adapter-aware LMCache key was closed as not planned~\cite{lmcacheissue}, a vLLM issue records the conflict~\cite{vllmissue}, and recent work exploits cross-adapter reuse while acknowledging naive sharing is lossy~\cite{alora}. We unify unintentional adapter, precision, version and KV-representation reuse as provenance-blind reuse, demonstrate a cross-tenant integrity failure, and close all four with one binding.

\section{Conclusion}
\label{sec:conclusion}

This paper identifies provenance-blind reuse as a structural composition gap at the engine--connector boundary in shared KV caching: every audited connector omits provenance that the engine tracks locally, causing cross-configuration collisions that reduce task accuracy by up to 30 points or to zero. We formalize the missing guarantee as the KV provenance contract, parametric over a declared dimension registry so that coverage extends without connector-specific key changes, and pair it with a differential checker that makes omitted bindings directly testable. Provenance binding, implemented in vLLM and SGLang across three cache paths, eliminates unsafe reuse on all five bound dimensions while preserving same-provenance sharing at no resolvable hit-path latency penalty.

\bibliographystyle{ACM-Reference-Format}
\bibliography{references}

\end{document}